\theoremstyle{plain}
\newtheorem{thm}{Theorem}
\newtheorem{prop}[thm]{Proposition}
\theoremstyle{definition}
\theoremstyle{remark}
\author{Liumeng~Wang and Sheng~Zhou,~\IEEEmembership{Member,~IEEE}
\thanks{This work is sponsored in part by the National Science Foundation of China (NSFC) under grant No. 61571265, No. 61461136004, and No. 91638204, and Intel Collaborative Research Institute for Mobile Networking and Computing.

Liumeng Wang and Sheng Zhou are with Tsinghua National Laboratory for Information Science and Technology, Dept. of
Electronic Engineering, Tsinghua University, Beijing 100084, China (Email:
wlm14@mails.tsinghua.edu.cn; sheng.zhou@tsinghua.edu.cn).
}}
\title{On the Fronthaul Statistical Multiplexing Gain}
\begin{document}

\maketitle

\begin{abstract}
Breaking the fronthaul capacity limitations is vital to make cloud radio access network (C-RAN) scalable and practical.
One promising way is aggregating several remote radio units (RRUs) as a cluster to share a fronthaul link, so as to enjoy the statistical multiplexing gain brought by the spatial randomness of the traffic. In this letter, a tractable model is proposed to analyze the fronthaul statistical multiplexing gain. We first derive the user blocking probability caused by the limited fronthaul capacity, including its upper and lower bounds. We then obtain the limits of fronthaul statistical multiplexing gain when the cluster size approaches infinity. Analytical results reveal that the user blocking probability decreases exponentially with the average fronthaul capacity per RRU, and the exponent is proportional to the cluster size. Numerical results further show considerable fronthaul statistical multiplexing gain even at a small to medium cluster size.
\end{abstract}

\begin{IEEEkeywords}
Cloud-Radio Access Network (C-RAN), fronthaul, statistical multiplexing, stochastic geometry.
\end{IEEEkeywords}

\section{Introduction}
\label{sec:intro}
Recent evolution of radio access network (RAN) features the baseband processing centralization, which enables efficient cooperative signal processing and can potentially reduce the operation and deployment costs.
In a typical cloud radio access network (C-RAN) \cite{CRANwhitepaper}, baseband processing functions are centralized in baseband units (BBUs) while radio functions are integrated in remote radio units (RRUs). BBUs and RRUs are connected with fronthaul network, over which baseband signals are transported. Despite many advantages, one major design challenge of C-RAN is to meet its high fronthaul capacity demand. One typical interface between RRUs and BBUs is Common Public Radio Interface (CPRI) \cite{CPRI}, which can only support fixed-rate delivery of raw baseband signals. For example, 1Gbps fronthaul rate is required even by a single 20MHz LTE antenna-carrier.

To ease the severe burden on the fronthaul, many solutions have been proposed. On the link level, one direct way is to increase the fronthaul capacity, such as using single fiber bi-direction, wavelength-division multiplexing, and etc \cite{chih2014recent}. The other is to reduce the required data rate on the fronthaul, by means of baseband signal compression \cite{lorca2013lossless}, RRU-BBU functionality splitting \cite{SplittingBS}, radio resource allocation \cite{FHCons}, and etc.
On the network level, packet switching can provide hierarchical and flexible fronthaul networking \cite{redesign}, allowing multiple RRUs to form a cluster and share a fronthaul link, which is adopted by the recent IEEE 1914 working group for next generation fronthaul interface (NGFI) \footnote{http://sites.ieee.org/sagroups-1914/}. Moreover, with appropriate RRU-BBU functional split, data rates in the fronthaul links can be traffic-dependent \cite{SplittingBS}. As a result, the randomness of user traffic can be exploited to enjoy the statistical multiplexing gain, in the hope of reducing the fronthaul capacity demand. Some initial numerical studies have validated the fronthaul statistical multiplexing gain brought by the packeterized fronthaul and functional split \cite{FHgain}. Teletraffic theory and event-driven simulations are adopted in \cite{FH2016Gain} to analyze the fronthaul statistical multiplexing gain in C-RAN.

In this paper, we propose a tractable model to quantitatively analyze the fronthaul statistical multiplexing gain. We derive the probability of user blocking due to the limited fronthaul capacity, and obtain its upper and lower bounds. We then use large-limit analysis to get the closed-form expression of the fronthaul statistical multiplexing gain, which enables quantifying the statistical multiplexing gain under large cluster size. Numerical results further confirm that even a small to medium cluster size can obtain notable fronthaul rate reduction.

\section{System Model}
\label{sec:sysmodel}
\begin{figure}[t]
  \centering
  \includegraphics[width=0.35\textwidth]{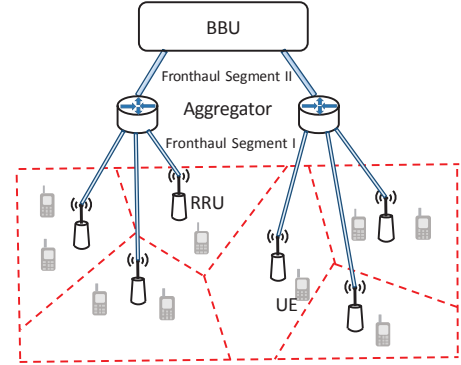}
  \caption{C-RAN architecture with fronthaul links of two-segments with cluster size $K=3$.}
  \label{fig:model}
\end{figure}

Consider a fronthaul network with two segments shown in Fig.\ref{fig:model}.
Fronthaul segment \uppercase\expandafter{\romannumeral1} represents the direct links between RRUs and the aggregators, treated as the "last mile" of the fronthaul network. It can be implemented by CPRI with acceptable costs due to the short transmission distances.
Segment \uppercase\expandafter{\romannumeral2} is the connection between the aggregators and the BBU, and each link is shared by $K$ RRUs formed as a cluster.
Our analysis is focusing on the fronthaul links in segment \uppercase\expandafter{\romannumeral2}.

The capacity of a fronthaul link in segment \uppercase\expandafter{\romannumeral2} is denoted by $T$, shared by an RRU cluster. If the RRU-BBU function is split between layer 1 and layer 2 \cite{SplittingBS}, or with proper fronthaul compression schemes \cite{lorca2013lossless}, the required fronthaul rate depends on the number of users in the corresponding RRUs. We assume that each user requires the same fronthaul rate, and unify the fronthaul capacity $T$ as the maximum number of users whose baseband signals can be transported in the fronthaul link simultaneously.

The spatial distribution of RRUs is modeled by a homogeneous Poisson point process (PPP) with density ${\lambda}_{\text{r}}$, and the spatial distribution of users is also assumed a homogeneous PPP with density ${\lambda}_{\text{u}}$ \cite{Andrew11Tractable}. To simplify the analysis, we assume that there is no coverage overlap between RRUs, i.e., each user is severed by its nearest RRU, and as a result the coverage area of each RRU is a Voronoi cell. We assume that each RRU's location in the cluster is independent to each other\cite{FH2016Gain}, and thus the coverage sizes of different RRUs are also independent. Denote the coverage size of the $i$th RRU as $x_i$, and the distribution of $x_i$ can be approximated by a gamma distribution, i.e., $x_i\thicksim{\Gamma{\left(a,1/({b\lambda_{\text{r}}})\right)}}$,
\begin{equation}
\Gamma{\left(a,1/({b\lambda_{\text{r}}})\right)}={(b\lambda_{\text{r}})}^{a}{\left({x_i}\right)}^{a-1}e^{\left(-b{\lambda}_{\text{r}}{x_i}\right)}/{\Gamma(a)}, \nonumber
\end{equation}
where $a=3.5$, $b=3.5$~\cite{VoronoiCell}, and $\Gamma(\cdot)$ is the Gamma function.
As the scale parameters $\frac{1}{b\lambda_r}$ of all the $K$ gamma distributions are the same, denote the total coverage size of $K$ RRUs as $x=\sum_{i=1}^{K}{x_i}$ , $x\thicksim{\Gamma{\left(Ka,\frac{1}{b\lambda_{\text{r}}}\right)}}$ \cite{GammaSum}. The probability distribution of $x$ given $K$ is then
\begin{equation}\label{eq:clustersize}
f_{x|K}(x)=\frac{{(b\lambda_{\text{r}})}^{Ka}}{{\Gamma(Ka)}}{\left({x}\right)}^{Ka-1}e^{\left(-b{\lambda}_{\text{r}}{x}\right)}.
\end{equation}

Given the RRU cluster coverage size $x$, the total number of users in the RRU cluster , denoted by $N$, obeys the Poisson distribution with mean ${{\lambda}_{\text{u}}}x$, i.e.,
\begin{equation}\label{eq:usernum}
P_K\{N=n|x\} =\frac{({\lambda}_{\text{u}}x)^{n}}{n!}e^{-{{\lambda}_{\text{u}}}x}.
\end{equation}

According to (\ref{eq:clustersize}) and (\ref{eq:usernum}), the distribution of $N$ is
\begin{align} \label{userdist}
P_K\{N=n\}&=\int_0^{\infty}{P_K\{N=n|x\}f_{x|K}(x)}\text{d}x \nonumber \\
&=\int_0^{\infty}\frac{({{\lambda}_{\text{u}}}x)^{n}}{n!}e^{{{\lambda}_{\text{u}}}x}\frac{{(b\lambda_r)}^{Ka}}{{\Gamma(Ka)}}{\left({x}\right)}^{Ka-1}e^{\left(-b{\lambda}_{\text{r}}{x}\right)} \text{d}x \nonumber \\
&=\frac{b^{Ka}{{({\lambda}_{\text{u}}/{\lambda}_{\text{r}})}^n}{\Gamma(n+Ka)}}{{({\lambda}_{\text{u}}/{\lambda}_{\text{r}}+b)}^{Ka+n}{\Gamma(Ka)}n!}.
\end{align}

\section{User Blocking Probability}
\label{sec:blocking}
Due to the fronthaul capacity limitation, the services of the users may be blocked.
If the total number of users per cluster $N$ is larger than the fronthaul capacity $T$, only the baseband signals of $T$ (randomly selected) users will be transmitted in the fronthaul, while services of the remaining $N-T$ users are blocked. The quality of service (QoS) requires the blocking probability be smaller than a threshold  $P_{\text{b}}^{\text{th}}$.
Given the RRU cluster size $K$ and the shared fronthaul capacity $T$, the user blocking probability is expressed as
\begin{align} \label{eq:oribp}
P_{\text{b}}^{K,T}&=\sum_{n=T+1}^{\infty}{P_K\{N=n\}(n-T)/n} \nonumber \\
&=\sum_{n=T+1}^{\infty}{\frac{b^{Ka}(n-T){{({\lambda}_{\text{u}}/{\lambda}_{\text{r}})}^n}{\Gamma(n+Ka)}}{n{({\lambda}_{\text{u}}/{\lambda}_{\text{r}}+b)}^{Ka+n}{\Gamma(Ka)}n!}}.
\end{align}
The expression of $P_{\text{b}}^{K,T}$ is the sum of an infinite series, and it is hard to have intuitive derivations.
Nevertheless, the elements in the summation of (\ref{eq:oribp}) have the following property:
\begin{align} \label{eq:Ratio}
&\frac{P_K\{N=n+1\}(n+1-T)/(n+1)}{P_K\{N=n\}(n-T)/n} \nonumber \\
&=\frac{({\lambda}_{\text{u}}/{\lambda}_{\text{r}})n(n+Ka)(n+1-T)}{({\lambda}_{\text{u}}/{\lambda}_{\text{r}}+b)(n+1)^2(n-T)} \nonumber \\
&=\frac{n+1-T}{n-T}a_n^K\lambda,
\end{align}
where $\lambda=({\lambda}_{\text{u}}/{\lambda}_{\text{r}})/({\lambda}_{\text{u}}/{\lambda}_{\text{r}}+b)$, $a_n^K=n(n+Ka)/(n+1)^2$. As $n>T$, according to the summation in (\ref{eq:oribp}), when $T>2/(Ka-2)$,
\begin{equation} \label{eq:ank}
 1<a_n^K\leq{(T+1)(T+Ka+1)}/{(T+2)^2},
\end{equation}
and only when $n=T+1$, $a_n^K=\frac{(T+1)(T+Ka+1)}{(T+2)^2}$. In fact, as $K \geq 1$ and $T \geq 1$, $T>2/(Ka-2)$ will always hold unless $K=1$ and $T=1$, which is a trivial scenario ignored in our analysis.
We can thus get a lower bound of $P_{\text{b}}^{K,T}$ as
\begin{align} \label{eq:lowerbound}
P_{\text{b}}^{K,T}&>\frac{P_K\{N=T+1\}}{T+1}\sum_{n=T+1}^{\infty}{(n-T)\lambda^{n-T-1}} \nonumber \\
&=\frac{P_K\{N=T+1\}}{(T+1)(1-\lambda)^2} \triangleq P_{\text{b,LB}}^{K,T}.
\end{align}
Similarly, we find an upper bound of $P_{\text{b}}^{K,T}$ as

\begin{align} \label{eq:upperbound}
P_{\text{b}}^{K,T} \nonumber <&\frac{P_K\{N=T+1\}}{T+1} \nonumber \\
&\times \sum_{n=T+1}^{\infty}{(n-T){\left(\frac{(T+1)(T+Ka+1)\lambda}{(T+2)^2}\right)}^{n-T-1}} \nonumber \\
=&\frac{P_K\{N=T+1\}}{T+1}\sum_{n=1}^{\infty}{n{\left(\frac{(T+1)(T+Ka+1)\lambda}{(T+2)^2}\right)}^{n-1}} \nonumber \\
=&\frac{P_K\{N=T+1\}}{(T+1)\left(1-\frac{(T+1)(T+Ka+1)\lambda}{(T+2)^2}\right)^2} \triangleq P_{\text{b,UB}}^{K,T}.
\end{align}
Furthermore, when $T$ approaches infinity,
\begin{equation}
\lim_{T \to +\infty}\frac{P_{\text{b,LB}}^{K,T}}{P_{\text{b,UB}}^{K,T}}=\lim_{T \to +\infty}\frac{\left(1-\frac{(T+1)(T+Ka+1)\lambda}{(T+2)^2}\right)^2}{(1-\lambda)^2}=1,
\end{equation}
which means that when $T$ approaches infinity, the upper bound tends to be equal to the lower bound. As a result, the user blocking probability $P_{\text{b}}^{K,T}$ can be well approximated by either the upper bound or the lower bound when $T$ is large.

\section{Multiplexing Gain Analysis}
\label{sec:gain}
 Denoted by $T_K(P_{\text{b}}^{\text{th}})$ the minimum fronthaul capacity to satisfy a given user blocking probability requirement $P_{\text{b}}^{\text{th}}$ when $K$ RRUs are sharing the fronthaul link, the fronthaul statistical multiplexing gain $G_K(P_{\text{b}}^{\text{th}})$ is accordingly defined as
\begin{equation}
\label{eq:defgain}
G_K(P_{\text{b}}^{\text{th}})=\frac{T_1(P_{\text{b}}^{\text{th}})-T_K(P_{\text{b}}^{\text{th}})/K}{T_1(P_{\text{b}}^{\text{th}})} \nonumber =1-\frac{T_K(P_{\text{b}}^{\text{th}})}{KT_1(P_{\text{b}}^{\text{th}})},
\end{equation}
which represents the relative saving of the equivalent fronthaul capacity per RRU compared with the scenario with no fronthaul sharing, i.e., $K=1$.

The closed-form expression of $G_K(P_{\text{b}}^{\text{th}})$ with general $K$ is hard to derive. Nevertheless, we can still get the closed-form expression with large-$K$ limit, as follows.

\begin{prop} \label{prop:1}
When the cluster size $K$ approaches infinity,
\begin{equation} \label{eq:gaininf}
G_{\infty}(P_{\text{b}}^{\text{th}})=1-\frac{(1-P_{\text{b}}^{\text{th}})\mu}{T_1(P_{\text{b}}^{\text{th}})},
\end{equation}
where $\mu=\lambda_{\text{u}}/\lambda_{\text{r}}$ is the average number of users in the coverage of an RRU.
\end{prop}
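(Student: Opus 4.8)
The plan is to reduce the proposition to a single asymptotic limit; throughout, fix $P_{\text{b}}^{\text{th}}\in(0,1)$ and abbreviate $T_K:=T_K(P_{\text{b}}^{\text{th}})$. By the definition of $G_K(P_{\text{b}}^{\text{th}})$ one has $G_K(P_{\text{b}}^{\text{th}})=1-T_K/(K\,T_1)$, where $T_1$ is a fixed finite positive integer (for $K=1$ the population $N$ follows the negative-binomial law (\ref{userdist}), whose tails decay exponentially, so a finite capacity meets $P_{\text{b}}^{\text{th}}$). It therefore suffices to show that $T_K/K\to(1-P_{\text{b}}^{\text{th}})\mu$ as $K\to\infty$; the claim (\ref{eq:gaininf}) then follows on letting $K\to\infty$ in the definition of the gain.

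The first ingredient is a law of large numbers for the cluster population. Writing the total coverage as $x=\sum_{i=1}^{K}x_i$ with the per-cell sizes $x_i$ i.i.d.\ $\Gamma\bigl(a,1/(b\lambda_{\text{r}})\bigr)$ and letting $N_i$ be the number of users in cell $i$, under (\ref{eq:usernum}) the population $N$ has the law of the i.i.d.\ sum $\sum_{i=1}^{K}N_i$ with $N_i\mid x_i\sim\mathrm{Poisson}(\lambda_{\text{u}}x_i)$, so $\mathbb{E}[N_i]=\lambda_{\text{u}}\mathbb{E}[x_i]=\lambda_{\text{u}}a/(b\lambda_{\text{r}})=\lambda_{\text{u}}/\lambda_{\text{r}}=\mu$ (since $a=b$). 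The strong law then gives $N/K\to\mu$ almost surely as $K\to\infty$; a Chebyshev bound using $\mathrm{Var}(N)=O(K)$ would serve equally well.

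Next I would describe the blocking probability along capacities that scale linearly with the cluster size. From (\ref{eq:oribp}), $P_{\text{b}}^{K,T}=\mathbb{E}\bigl[(N-T)^{+}/N\bigr]$, with the convention that this random variable is $0$ on $\{N=0\}$. For any sequence of integer capacities $T^{(K)}$ with $T^{(K)}/K\to c>0$, on $\{N\ge1\}$ the random variable equals $\bigl(1-(T^{(K)}/K)/(N/K)\bigr)^{+}$; since $(s,r)\mapsto(1-s/r)^{+}$ is continuous at $(c,\mu)$ with $\mu>0$, the continuous-mapping theorem together with $N/K\to\mu$ gives convergence in probability to $(1-c/\mu)^{+}$, and since the variable is bounded by $1$, bounded convergence yields $P_{\text{b}}^{K,T^{(K)}}\to(1-c/\mu)^{+}$. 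The same conclusion holds along any subsequence of cluster sizes.

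Finally I would pin down $\lim_{K}T_K/K=(1-P_{\text{b}}^{\text{th}})\mu=:c^{\star}$ by a two-sided squeeze, using that $P_{\text{b}}^{K,T}$ is nonincreasing in $T$, so that $T_K$ is well defined as the smallest \emph{admissible} integer capacity (one with $P_{\text{b}}^{K,T}\le P_{\text{b}}^{\text{th}}$) and is minimal among such. If $\liminf_{K}T_K/K=\ell<c^{\star}$, then along a subsequence realizing $\ell$ the previous step forces $P_{\text{b}}^{K,T_K}\to1-\ell/\mu>1-c^{\star}/\mu=P_{\text{b}}^{\text{th}}$, contradicting admissibility of $T_K$; hence $\liminf_{K}T_K/K\ge c^{\star}$. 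Conversely, fixing $\epsilon>0$ small enough that $\epsilon/\mu<P_{\text{b}}^{\text{th}}$ and setting $\hat{T}^{(K)}=\lceil(c^{\star}+\epsilon)K\rceil$, the previous step gives $P_{\text{b}}^{K,\hat{T}^{(K)}}\to P_{\text{b}}^{\text{th}}-\epsilon/\mu<P_{\text{b}}^{\text{th}}$, so $\hat{T}^{(K)}$ is admissible for all large $K$; minimality then forces $T_K\le\hat{T}^{(K)}$, hence $\limsup_{K}T_K/K\le c^{\star}+\epsilon$, and letting $\epsilon\downarrow0$ closes the squeeze. Together with the first paragraph this proves (\ref{eq:gaininf}). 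I expect the main obstacle to be not any individual estimate but the care needed in this last step---the integer rounding of the capacity, the $\{N=0\}$ atom, and the correct use of monotonicity and minimality of $T_K$---together with justifying the interchange of limit and expectation; the concentration and continuity inputs are routine.
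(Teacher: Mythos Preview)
Your proof is correct and follows essentially the same route as the paper: both invoke the law of large numbers for $N/K$, compute the limiting blocking probability $(1-\overline{T}/\mu)^{+}$ for capacities scaling linearly with $K$, and then invert to obtain $T_K/K\to(1-P_{\text{b}}^{\text{th}})\mu$. Your treatment is in fact more careful---packaging the limit via continuous mapping and bounded convergence where the paper does a direct $\epsilon$-sandwich on the sum, and supplying the explicit $\liminf/\limsup$ squeeze for $T_K/K$ that the paper simply asserts---but the underlying strategy is identical.
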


\begin{proof}
Given the RRU cluster size $K$, we have $E(N|K)=K\mu$ as the coverage sizes of different RRUs are independent.
According to the law of large numbers, when $K$ approaches infinity, for any positive real $\epsilon$, we have
\begin{equation}\label{eq:largenum}
\lim_{K \to \infty}\text{P}_K(\left|N/K-\mu\right|>\epsilon)=0.
\end{equation}
We define the equivalent fronthaul capacity allocated to each RRU as $\overline{T}=T/K$. If $\overline{T}>\mu$, according to (\ref{eq:largenum}),
\begin{align}
\lim_{K \to +\infty}P_{\text{b}}^{K,T}&=\lim_{K \to +\infty}\sum_{n=K\overline{T}+1}^{\infty}\frac{n-K\overline{T}}{n}P_K(N=n) \nonumber\\
&\leq\lim_{K \to +\infty}\sum_{n=K\overline{T}+1}^{\infty}P_K(N=n) \nonumber\\
&=\lim_{K \to +\infty}P_K(\frac{N}{K}-\mu>\overline{T}-\mu)=0.
\end{align}
If $\overline{T}<\mu$, for any positive real $\epsilon<(\mu-\overline{T})$,
\begin{align}\label{eq:inftyupper}
\lim_{K \to +\infty}P_{\text{b}}^{K,T}&=\lim_{K \to +\infty}\sum_{n=K\overline{T}+1}^{\infty}\frac{n-K\overline{T}}{n}P_K(N=n) \nonumber\\
&=\lim_{K\to+\infty}\sum_{n=K(\mu-\epsilon)}^{n=K(\mu+\epsilon)}\frac{n-K\overline{T}}{n}P_K(N=n) \nonumber \\
&\geq \frac{K(\mu-\epsilon)-K\overline{T}}{K(\mu-\epsilon)}\lim_{K\to+\infty}P_K(\left|N/K-\mu\right|\leq\epsilon) \nonumber \\
&=1-\frac{\overline{T}}{\mu-\epsilon}.
\end{align}
Similarly, we get
\begin{equation} \label{eq:inftylower}
\lim_{K\to+\infty}\sum_{n=K(\mu-\epsilon)}^{n=K(\mu+\epsilon)}\frac{n-K\overline{T}}{n}P_K(N=n)\leq 1-\frac{\overline{T}}{\mu+\epsilon}.
\end{equation}
According to (\ref{eq:inftyupper}) and (\ref{eq:inftylower}),
for any positive real $\delta$, there always exists a positive real $\epsilon<\min\{\frac{\mu^2\delta}{\overline{T}+\mu\delta},\mu-\overline{T}\}$, s.t.
\begin{equation}
\left|\lim_{K\to+\infty}\sum_{n=K(\mu-\epsilon)}^{n=K(\mu+\epsilon)}\frac{n-K\overline{T}}{n}P_K(N=n)-(1-\frac{\overline{T}}{\mu})\right|<\delta,
\end{equation}
i.e., for any positive real $\delta$,
\begin{equation}
\left|\lim_{K \to +\infty}P_{\text{b}}^{K,T}-(1-\frac{\overline{T}}{\mu})\right|<\delta.
\end{equation}
Thus when $K$ approaches infinity,
$\lim_{K \to +\infty}P_{\text{b}}^{K,T}=1-\frac{\overline{T}}{\mu}$,
and the minimum required equivalent fronthaul capacity is
\begin{equation} \label{eq:eqcapacity}
\lim_{K \to +\infty}\frac{T_K(P_{\text{b}}^{\text{th}})}{K}=(1-P_{\text{b}}^{\text{th}})\mu.
\end{equation}
According to (\ref{eq:defgain}) and (\ref{eq:eqcapacity}), we get
\begin{equation}\label{eq:inftypb}
G_{\infty}(P_{\text{b}}^{\text{th}})=\lim_{K \to +\infty}1-\frac{T_K(P_{\text{b}}^{\text{th}})}{KT_1(P_{\text{b}}^{\text{th}})} =1-\frac{(1-P_{\text{b}}^{\text{th}})\mu}{T_1(P_{\text{b}}^{\text{th}})}.
\end{equation}
\end{proof}

This proposition reveals the ultimate multiplexing gain with RRU fronthaul sharing, which is obtained when the RRU cluster size $K$ approaches infinity, and via numerical results we will show that the majority of the gain can be achieved even with small to medium RRU cluster sizes.
Furthermore, we characterize how the RRU cluster size $K$ can expedite the decreasing rate of the user blocking probability $P_b^{K,T}$ w.r.t. the equivalent fronthaul capacity $\overline{T}$ in the following proposition.
\begin{prop}
\label{prop:2}
When $T$ approaches infinity, $P_b^{K,T}$ decreases exponentially with the equivalent fronthaul capacity $\overline{T}$, and the exponent is proportional to $K$.
\end{prop}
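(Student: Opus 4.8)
The plan is to read the exponential rate straight off the lower bound (\ref{eq:lowerbound}). We already have $P_{\text{b,LB}}^{K,T}<P_{\text{b}}^{K,T}<P_{\text{b,UB}}^{K,T}$ from (\ref{eq:lowerbound})--(\ref{eq:upperbound}), all three quantities are strictly positive, and $P_{\text{b,LB}}^{K,T}/P_{\text{b,UB}}^{K,T}\to 1$ as $T\to\infty$ (shown at the end of Section~\ref{sec:blocking}), so that $\log P_{\text{b,UB}}^{K,T}=\log P_{\text{b,LB}}^{K,T}+o(1)$. Hence, with $\overline{T}=T/K$ as in the proof of Proposition~\ref{prop:1}, it suffices to compute $\lim_{\overline{T}\to\infty}\overline{T}^{-1}\log P_{\text{b,LB}}^{K,T}$; the monotonicity of $\log$ and a squeeze then give the same limit for $\overline{T}^{-1}\log P_{\text{b}}^{K,T}$.

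Next I would evaluate $P_{\text{b,LB}}^{K,T}=P_K\{N=T+1\}/[(T+1)(1-\lambda)^2]$ (from (\ref{eq:lowerbound})) for large $T$. Putting $n=T+1$ in the closed form (\ref{userdist}) and writing $\mu=\lambda_{\text{u}}/\lambda_{\text{r}}$, so that $\lambda=\mu/(\mu+b)$,
\[
P_K\{N=T+1\}=\frac{b^{Ka}}{\Gamma(Ka)(\mu+b)^{Ka}}\,\frac{\Gamma(T+1+Ka)}{\Gamma(T+2)}\,\lambda^{\,T+1}.
\]
Since $K$, and hence $Ka$, is held fixed while $T\to\infty$, the standard ratio estimate $\Gamma(T+1+Ka)/\Gamma(T+2)\sim(T+1)^{Ka-1}$ (Stirling) makes $P_{\text{b,LB}}^{K,T}$ asymptotically equal to a positive constant times $(T+1)^{Ka-2}\lambda^{\,T+1}$.

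Finally I would take logarithms and substitute $T=K\overline{T}$:
\[
\log P_{\text{b,LB}}^{K,T}=(T+1)\log\lambda+(Ka-2)\log(T+1)+O(1)=-K\overline{T}\log\tfrac{1}{\lambda}+(Ka-2)\log(K\overline{T})+O(1)
\]
as $\overline{T}\to\infty$, and dividing by $\overline{T}$ annihilates the logarithmic term. By the squeeze of the first paragraph,
\[
\lim_{\overline{T}\to\infty}\frac{\log P_{\text{b}}^{K,T}}{\overline{T}}=-K\log\frac{1}{\lambda}=-K\log\!\Big(1+\frac{b}{\mu}\Big)<0,
\]
the strict negativity because $b>0$. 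That is precisely the assertion: $P_{\text{b}}^{K,T}$ decays exponentially in $\overline{T}$ with exponent $K\log(1+b/\mu)$, i.e.\ proportional to the cluster size $K$.

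I expect the only real friction to be bookkeeping the polynomial prefactor $(T+1)^{Ka-2}$: its exponent itself grows with $K$, so one must note that for every fixed $K$ it is still sub-exponential in $\overline{T}$ and therefore vanishes after dividing $\log P_{\text{b,LB}}^{K,T}$ by $\overline{T}$; one also has to keep track of which quantities are positive so that the ordering survives under $\log$. The Gamma-ratio asymptotics and the relation $P_{\text{b,LB}}^{K,T}/P_{\text{b,UB}}^{K,T}\to 1$ are respectively textbook and already established, so they pose no real difficulty.
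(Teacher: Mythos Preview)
Your argument is correct and reaches exactly the same exponential rate, $K\log\lambda$, but the route differs from the paper's. The paper does not use the closed-form bounds $P_{\text{b,LB}}^{K,T}$, $P_{\text{b,UB}}^{K,T}$ at all; instead it bounds the \emph{ratio} $P_{\text{b}}^{K,T+1}/P_{\text{b}}^{K,T}$ directly by reindexing the sum (\ref{eq:oribp}) and invoking the already-proved inequality (\ref{eq:ank}) on $a_n^K$, obtaining $\lambda<P_{\text{b}}^{K,T+1}/P_{\text{b}}^{K,T}<\tfrac{(T+1)(T+Ka+1)}{(T+2)^2}\lambda$. Sending $T\to\infty$ gives $\log P_{\text{b}}^{K,T+1}-\log P_{\text{b}}^{K,T}\to\log\lambda$, and the factor $K$ appears only in the last line when passing from $T$ to $\overline{T}=T/K$. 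Your approach trades this discrete ratio test for an explicit Stirling evaluation of the lower bound; it is slightly heavier machinery but yields a cleaner large-deviations-style statement, $\lim_{\overline{T}\to\infty}\overline{T}^{-1}\log P_{\text{b}}^{K,T}=K\log\lambda$, whereas the paper's final line is phrased as a (somewhat informal, since $T$ is integer) derivative. Both are fine; the paper's version is more self-contained because it recycles (\ref{eq:ank}) rather than importing Gamma asymptotics.
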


\begin{proof}
For fronthaul capacity $T+1$, the user blocking probability can be expressed as
\begin{align}
P_{\text{b}}^{K,T+1}&=\sum_{n=T+2}^{\infty}{\frac{b^{Ka}(n-T-1){{({\lambda}_{\text{u}}/{\lambda}_{\text{r}})}^n}{\Gamma(n+Ka)}}{n{({\lambda}_{\text{u}}/{\lambda}_{\text{r}}+b)}^{Ka+n}{\Gamma(Ka)}n!}} \nonumber \\
&=\sum_{n=T+1}^{\infty}{\frac{b^{Ka}(n-T){{({\lambda}_{\text{u}}/{\lambda}_{\text{r}})}^{n+1}}{\Gamma(n+1+Ka)}}{(n+1){({\lambda}_{\text{u}}/{\lambda}_{\text{r}}+b)}^{Ka+n+1}{\Gamma(Ka)}(n+1)!}} \nonumber \\
&=\sum_{n=T+1}^{\infty}{a_n^K\lambda{\frac{b^{Ka}(n-T){{({\lambda}_{\text{u}}/{\lambda}_{\text{r}})}^n}{\Gamma(n+Ka)}}{n{({\lambda}_{\text{u}}/{\lambda}_{\text{r}}+b)}^{Ka+n}{\Gamma(Ka)}n!}}}.
\end{align}
When $T>2/(Ka-2)$, according to (\ref{eq:ank}), we have
\begin{equation} \label{eq:tul}
\lambda<\frac{P_{\text{b}}^{K,T+1}}{P_b^{K,T}}<\frac{(T+1)(T+Ka+1)}{(T+2)^2}\lambda.
\end{equation}
Taking the limit of (\ref{eq:tul}) with $T$ approaching infinity, leads to
\begin{align} \label{eq:limit}
&\lim_{T \to +\infty}\frac{P_{\text{b}}^{K,T+1}}{P_{\text{b}}^{K,T}}=\lambda, \nonumber \\
&\lim_{T \to +\infty}\log(P_{\text{b}}^{K,T+1})-\log(P_{\text{b}}^{K,T})=\log(\lambda).
\end{align}
Note that $\lambda=({\lambda}_{\text{u}}/{\lambda}_{\text{r}})/({\lambda}_{\text{u}}/{\lambda}_{\text{r}}+b)$.
According to (\ref{eq:limit}),
\begin{equation}
\lim_{T \to +\infty}\frac{\mathrm{d}\log(P_{\text{b}}^{K,T})}{\mathrm{d}{\overline{T}}}=K\log(\lambda).
\end{equation}
\end{proof}

\section{Numerical Results}
\label{sec:num}
The average number of users per RRU $\lambda_{\text{u}}/\lambda_{\text{r}}$ is set to $5$ in our numerical study.
For the simulations, we consider totally 1000 RRUs, and each $K=1, 3, 5, 10, 20, 50$, and 100 RRUs are selected as a cluster, respectively.
The user blocking probability w.r.t. the equivalent fronthaul capacity per RRU $\overline{T}$ is presented in Fig.~\ref{fig:bp}. With the same $\overline{T}$, the larger the cluster size $K$, the smaller the user blocking probability will be. With larger $K$, the user blocking probability also decreases more rapidly. Note that in Fig.~\ref{fig:bp}, the y-axis of the smal rectangle is decimal while the large y-axis is logarithmic.
\begin{figure}[!t]
  \centering
  \includegraphics[width=0.4\textwidth]{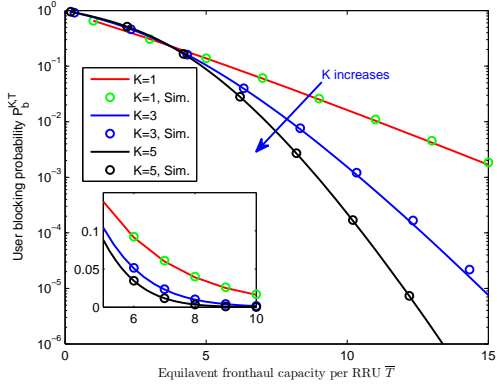}
  \caption{User blocking probability versus equivalent fronthaul capacity per RRU $\overline{T}$ under different cluster size $K$.}
  \label{fig:bp}
\end{figure}

\begin{figure}[!t]
  \centering
  \includegraphics[width=0.4\textwidth]{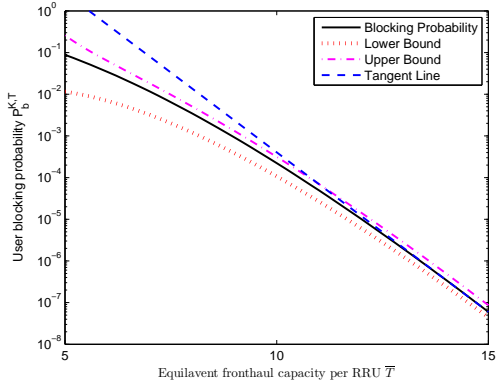}
  \caption{The upper bound and lower bound of the user blocking probability when $K=5$.}
  \label{fig:appr}
\end{figure}
The upper and lower bounds of the user blocking probability are presented in Fig.~\ref{fig:appr} when $K=5$. We can see that with medium to large equivalent fronthaul capacity $\overline{T}$, the blocking probability can be well approximated by the upper bound and the lower bound. The tangent line shows that the user blocking probability decreases exponentially with $\overline{T}$ when $\overline{T}$ is large, which agrees with Proposition \ref{prop:2}.

If the required user blocking probability $P_{\text{b}}^{\text{th}}$ is $5\%$, as shown in Fig.~\ref{fig:gain}, when $K=1$, the minimum required fronthaul capacity $T_1(0.05)=8$, while $T_5(0.05)/5=5.8$, the corresponding fronthaul statistical multiplexing gain $G_5(0.05)=27.5\%$. When cluster size $K$ approaches infinity, according to (\ref{eq:gaininf}), $G_{\infty}(0.05)=40.6\%$.
\begin{figure}[!t]
  \centering
  \includegraphics[width=0.4\textwidth]{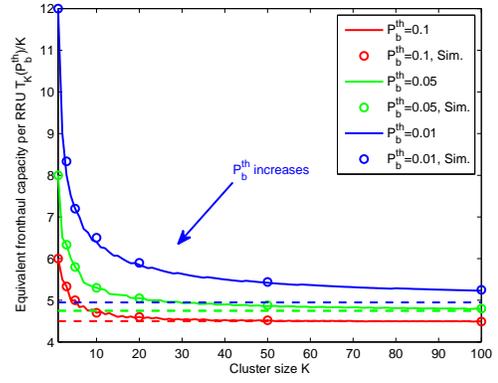}
  \caption{Required equivalent fronthaul capacity $T_K(P_b^{th})/K$ versus RRU cluster size $K$. Dashed-lines denote $T_{\infty}(P_b^{th})$.}
  \label{fig:gain}
\end{figure}
Fig.~\ref{fig:gain} also shows that when the cluster size $K$ increases, the equivalent fronthaul capacity required per RRU tends to decrease. The decreasing rate slows down when $K$ increases, and a small to medium sized cluster can obtain substantial statistical multiplexing gain close to the large-$K$ limit in Proposition \ref{prop:1}.

\section{Conclusions}
\label{sec:future}
In this paper, we have proposed a tractable model to analyze the fronthaul statistical multiplexing gain in C-RAN. We derive the user blocking probability $P_{\text{b}}^{K,T}$ given the RRU cluster size $K$ and the shared fronthaul capacity $T$. We further get the upper bound and the lower bound of $P_{\text{b}}^{K,T}$. We then analyze the statistical multiplexing gain and find that user blocking probability decreases exponentially with the average fronthaul capacity, and the exponent is proportional to the cluster size. Numerical results further reveal that through fronthaul multiplexing, the required average fronthaul capacity per RRU can be notably reduced, and even a small to medium cluster size can obtain considerable fronthaul capacity savings.

\bibliographystyle{IEEEtran}
\bibliography{IEEEabrv,FINAL}

\end{document}